\numberwithin{equation}{section}
\newcommand{\bea}{\begin{eqnarray}}
\newcommand{\eea}{\end{eqnarray}}
\newcommand{\ba}{\begin{array}}
\newcommand{\ea}{\end{array}}
\newcommand{\edc}{\end{document}}
\newcommand{\bc}{\begin{center}}
\newcommand{\ec}{\end{center}}
\newcommand{\be}{\begin{equation}}
\newcommand{\ee}{\end{equation}}
\def\bc{{\mathbb C}}
\newtheorem{thm}{Theorem}[section]
\newtheorem{lem}[thm]{Lemma}
\newtheorem{prop}[thm]{Proposition}
\newtheorem{defin}[thm]{Definition}
\theoremstyle{remark}
\newtheorem{rem}{Remark}[section]
\date{\today}
\begin{document}
\title[Paramagnetic phase]
{Exact description of paramagnetic and ferromagnetic phases of an
Ising model on a third-order Cayley tree}
\author{Hasan Ak\i n}
\address{Hasan Ak\i n, Ceyhun
Atuf Kansu Caddesi 1164. Sokak, 9/4, TR06105, \c{C}ankaya, Ankara,
Turkey} \email{{\tt akinhasan25@gmail.com}}
\date{\today }

\begin{abstract}
In this paper we analytically study the recurrence equations of an
Ising model with three competing interactions on a Cayley tree of
order three. We exactly describe paramagnetic and ferromagnetic
phases of the Ising model. We obtain some rigorous results:
critical temperatures and curves, number of phases, partition
function. Ganikhodjaev et al. \cite{GAUT2011Chaos} have
numerically studied the Ising model on a second-order Cayley tree.
We compare the numerical results to exact solutions of mentioned
model.
\\
\textbf{Keywords}: Cayley tree, Ising model, paramagnetic phase.\\
\textbf{PACS}: 05.70.Fh;  05.70.Ce; 75.10.Hk.
\end{abstract}
\maketitle

\tableofcontents
\section{Introduction}
A phase diagram of a model with variety of transition lines and
modulated phases are obtained via the presence of competing
interactions in magnetic and ferroelectric systems
\cite{Inawashiro}. In order to describe all phases of considered
model, one iterates the recurrence relations associated to the
given Hamiltonian and observes their behavior after a large number
of iterations \cite{Vannimenus,NHSS,MTA1985a,UGAT2012IJMPC}. The
ANNNI (Axial Next-Nearest-Neighbor Ising) model, which consists of
an Ising spin Hamiltonian on a Cayley tree, with ferromagnetic
interactions on the planes, and competing ferromagnetic and
antiferromagnetic interactions between nearest and next-nearest
neighbors along an axial direction, is known to reproduce some
features of these complex phase diagrams
\cite{Lebowitz1,Vannimenus,MTA1985a}.

In recent times, existence and quantification of the phase
diagrams after a large number of iterations of relevant recurrence
equations as a way of probing  ground phases of Ising model have
gained much attention
\cite{GAUT2011Chaos,GU2011a,GTA,Vannimenus,NHSS1,AUT2010AIP,UA2011CJP,UA2010PhysicaA,UGAT2012ACTA,UGAT2012IJMPC,MTA1985a,Inawashiro,Inawashiro-T1983}.
In generally by analyzing the regions of stability of different
types of fixed points of the system of recurrent relations, many
authors have plotted the phase diagrams of the model
\cite{GAUT2011Chaos,GU2011a,GTA,Vannimenus,NHSS1,AUT2010AIP,Moraal,UGAT2012IJMPC,MTA1985a}.
Most results of the above-mentioned works are obtained
numerically. In this paper, we try to get some of these results by
an analytical way by comparing the numerical results.

In \cite{UGAT2012IJMPC}, we numerically study Lyapunov exponent
and modulated phases for the Ising system with competing
interactions on an arbitrary order Cayley tree, the variation of
the wavevector $q$ with temperature in the modulated phase and the
Lyapunov exponent associated with the trajectory of our iterative
system are studied in detail. In \cite{RAU}, we have analytically
study the recurrence equations of an Ising model with two
competing interactions on a Cayley tree of order two and obtained
some exact results: critical temperatures and curves, number of
phases, partition function without considering the numerical
investigation. In \cite{Akin-Saygili2015} we have described the
exact solution of a phase transition problem by means Gibbs
measures of the Potts model on a Cayley tree of order three with
competing interactions. In \cite{ART} we have constructed a class
of new Gibbs measures by extending the known Gibbs measures
defined on a Cayley tree of order $k_0$ to a Cayley tree of higher
order $k > k_0$ for the Ising model. Nazarov and Rozikov
\cite{Nazarov-Rozikov} find the operator corresponding to the
periodic Gibbs distributions with period two and determine the
invariant subsets of this operator, which are used to describe the
periodic Gibbs distributions. Here in order to obtain the periodic
fixed points associated with the recurrence equations, we use the
similar methods in \cite{AGUT,AGTU2013ACTA}.


In the ref. \cite{Akin2017}, the author studies the Gibbs measures
associated with Vannimenus-Ising model for compatible conditions,
he has been interested in the existence of the
translation-invariant Gibbs measures with respect to the
compatible conditions. Our present results differ from
\cite{Akin2016,Akin2017,Akin2017a,AkinT2011CMP}, because we have
considered external magnetic field in the mentioned papers (see
\cite{BRZ,BleherG}).

In the present paper, we analytically study the recurrence
equations of an Ising model with three competing interactions on a
three order Cayley tree. We obtain the paramagnetic, ferromagnetic
and 2-period phases of the model via the related recurrence
equations. We exactly describe paramagnetic phase of the Ising
model. We obtain some rigorous results: critical temperatures and
curves, number of phases, partition function. This model was
numerically studied by Ganikhodjaev \emph{et al}.
\cite{GAUT2011Chaos} on semi-infinitive second-order Cayley tree.
\section{Preliminary}
\subsection{Cayley tree}
A tree  is a graph which is connected and contains no circuits. In
this paper, we consider an semi-infinite tree which has uniformly
bounded degrees. That is, the numbers of neighbors of any vertices
in this tree are uniformly bounded; we call it the uniformly
bounded tree. If the root of a tree has $k$ neighboring vertices
and other vertices have $k+1$ neighboring vertices, we call this
type of tree a Cayley tree. It is easy to see that this type of
tree is the special case of uniformly bounded tree. Cayley trees
(or Bethe lattices) are simple connected undirected graphs $G =
(V, E)$ ($V$ set of vertices, $E$ set of edges) with no cycles (a
cycle is a closed path of different edges), i.e., they are trees
\cite{Akin2017}. Let $\Gamma^k=(V,L,i)$ be the uniform Cayley tree
of order $k$ with a root vertex $x^{(0)}\in V$, where each vertex
has $(k + 1)$ neighbors with $V$ as the set of vertices and the
set of edges. The notation $i$ represents the incidence function
corresponding to each edge $\ell\in L$, with end points
$x_1,x_2\in V$. There is a distance $d(x,y)$ on $V$ the length of
the minimal point from $x$ to $y$, with the assumed length of 1
for any edge (see Figure \ref{cayley-tree-k=3}).

\begin{figure} [!htbp]\label{cayley-tree-k=3}
\centering
\includegraphics[width=60mm]{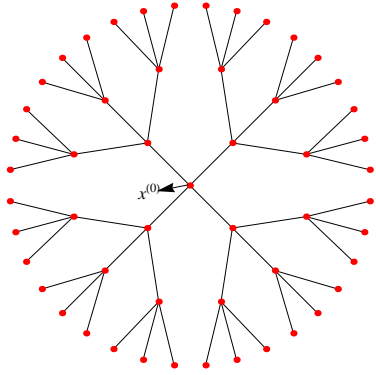}
\caption{Cayley tree of order $k=3$}\label{cayley-tree-k=3}
\end{figure}

Any $k$ ($k >1$)-order Cayley tree $\Gamma^{k}$ is a weave pattern
in which $(k + 1)$ edges from each vertex point extend infinitely
as shown in Fig. \ref{cayley-tree-k=3} ($k=3$). For the Cayley
tree shown as $\Gamma^{k}=(V,\Lambda)$, $V$ denotes the corner
points of the Cayley tree and $\Lambda$ denotes the set of edges.
If there is an edge $\ell$ joining two vertex $x$ and $y$, it is
called "nearest neighbor" and $l = <x, y>$. The distance of $x$
and $y$, over $V$ is defined as $d(x, y)$, the shortest path
between $x$ and $y$. For any given point $x^{(0)}\in V$, the set
of levels according to this point is given as $V_n = \{x \in V
|d(x, x^{(0)})\leq n\}$ and the set of edge points in $V_n$ is
represented as $L_n$.

The set of all vertices with distance $n$ from the root  $x^{(0)}$
is called the $n$th level of $\Gamma^{k}$ ann we denote the sphere
of radius $n$ on $V$ by
$$ W_n=\{x\in V:
d(x,x^{(0)})=n \}
$$ and the ball of radius $n$ by
$$ V_n=\{x\in V:
d(x,x^{(0)})\leq n \}.
$$ The set of direct successors of any vertex
$x\in W_n$ is denoted by
$$S_k(x)=\{y\in W_{n+1}:d(x,y)=1 \}.
$$

A Cayley tree $\Gamma^k$ of order $ k\geq 1 $ is an infinite tree,
i.e., a graph without cycles with exactly $ k+1 $ edges issuing
from each vertex. Let denote the Cayley tree as $\Gamma^k=(V,
\Lambda),$ where $V$ is the set of vertices of $ \Gamma^k$,
$\Lambda$ is the set of edges of $ \Gamma^k$.

The distance $d(x,y), x,y\in V$, on the Cayley tree $\Gamma^k$, is
the number of edges in the shortest path from $x$ to $y$. The
fixed vertex $x^{(0)}$ is called the $0$-th level and the vertices
in $W_n$ are called the $n$-th level. For the sake of simplicity
we put $|x|=d(x,x^{(0)})$, $x\in V$.
\begin{defin}\label{neighborhoods}
Hereafter, we will use the following definitions for
neighborhoods.
\begin{enumerate}
\item Two vertices $x$ and $y$, $x,y \in V$ are called {\it
\textbf{nearest-neighbors (NN)}} if there exists an edge
$l\in\Lambda$ connecting them, which is denoted by $l=<x,y>$.
\item The next-nearest-neighbor
vertices $x\in W_n$ and $z\in W_{n+2}$ are called {\it
\textbf{prolonged next-nearest-neighbors (PNNN)}} if $|x|\neq |y|$
and is denoted by $>x,z<$
\item The triple of vertices $x,y,z$ is called {\it \textbf{ternary prolonged next-nearest-neighbors}}
if $x\in W_n,y\in S(x)$ and $z\in S(y)$ ($x\in W_n,y\in W_{n+1}$
and $z\in W_{n+2}$) for some nonnegative integer $n$  and is
denoted by $>x,y,z<$.
\end{enumerate}
\end{defin}
In this paper, we will consider an Hamiltonian with
\textbf{competing nearest-neighbor interactions},
\textbf{prolonged next-nearest-neighbors (PNNN)}  and
\textbf{ternary prolonged next nearest-neighbor interactions}.
Therefore, we can state the Hamiltonian by {\small
\begin{equation}\label{Ham-Ist} H(\sigma )=-J\sum _{
\begin{array}{l}
 <x,y> \\
 y\in S(x)
\end{array}
} \sigma (x)\sigma (y)-J_p\sum _{
\begin{array}{l}
 >x,z< \\
 z\in S^2(x)
\end{array}
} \sigma (x)\sigma (z)-J_t\sum _{
\begin{array}{l}
 >x,y,z< \\
 y\in S(x) \\
 z\in S(y)
\end{array}
} \sigma (x)\sigma (y)\sigma (z),
\end{equation}}
where $J,J_p, J_t\in \mathbb{R}$ are coupling constants and $<x,
y>$ stands for NN vertices, $>x,z<$ stands for prolonged NNN and
$>x,y,z<$ stands for prolonged ternary NNN. Note that the joule,
symbol $J$, is a derived unit of energy in the International
System of Units.
\section{Recursive equations for Partition Functions}
There are several approaches to derive equation or system
equations describing limiting Gibbs measure (phases) for lattice
models on Cayley tree. One approach is based on properties of
Markov random fields on Bethe lattices
\cite{AT1,Rozikov,Bleher-Zalys,Bleher1990a}. Another approach is
based on recursive equations for partition functions (for example
\cite{Kindermann}). Naturally, both approaches lead to the same
equation (see \cite{AT1}). The second approach is more suitable
for models with competing interactions.

After specifying a Hamiltonian $H$, the equilibrium state of a
physical system with Hamiltonian $H$ is described by the
probability measure
$$
\mu(\sigma _n)=\frac{\exp [-\beta H(\sigma _n)]}{\sum_{\sigma
_n\in \{-1,+1\}^{V_n}} \exp [-\beta H(\sigma _n)]},
$$
where $\beta$ is a positive number which is proportional to the
inverse of the absolute temperature. The above $\mu$ is called the
Gibbs distribution relative to $H$. The standard approach consists
in writing down recurrence equations relating the partition
function
$$
Z_n=\sum _{\sigma _n\in \{-1,+1\}^{V_n}} \exp [-\beta H(\sigma
_n)],
$$
of an $n$-generation tree to the partition function $Z_{n-1}$ of
its subsystems containing $(n-1)$ generations (see Figure
\ref{fig2}).

As usual, we can introduce the notions of ground states (Gibbs
measure) of the Ising model with competing interactions on the
Cayley tree \cite{Akin2016,Akin2017,MAKfree2017}. It is convenient
to use a shorter notation to write down the recurrence system
explicitly:

\begin{eqnarray*}\label{rec-eq2}
\left\{
\begin{array}{l}
 z_1=Z_N\left(
\begin{array}{ccc}
 + & + & + \\
  & + &
\end{array}
\right), \\
 z_2=Z_N\left(
\begin{array}{ccc}
 + & + & - \\
   & + &
\end{array}
\right)=Z_N\left(
\begin{array}{ccc}
 + & - & + \\
   & + &
\end{array}
\right)=Z_N\left(
\begin{array}{ccc}
 - & + & + \\
   & + &
\end{array}
\right), \\
 z_3=Z_N\left(
\begin{array}{ccc}
 + & - & - \\
  & + &
\end{array}
\right)=Z_N\left(
\begin{array}{ccc}
 - & + & - \\
   & + &
\end{array}
\right)=Z_N\left(
\begin{array}{ccc}
 - & - & + \\
 & + &
\end{array}
\right), \\
 z_4=Z_N\left(
\begin{array}{ccc}
 - & - & - \\
  & + &
\end{array}
\right),
\end{array}
\right.
\end{eqnarray*}
\begin{eqnarray*}\label{rec-eq2}
\left\{
\begin{array}{l}
 z_5=Z_N\left(
\begin{array}{ccc}
 + & + & + \\
  & - &
\end{array}
\right), \\
 z_6=Z_N\left(
\begin{array}{ccc}
 + & + & - \\
   & - &
\end{array}
\right)=Z_N\left(
\begin{array}{ccc}
 + & - & + \\
   & - &
\end{array}
\right)=Z_N\left(
\begin{array}{ccc}
 - & + & + \\
   & - &
\end{array}
\right),\\
 z_7=Z_N\left(
\begin{array}{ccc}
 + & - & - \\
  & - &
\end{array}
\right)=Z_N\left(
\begin{array}{ccc}
 - & + & - \\
   & - &
\end{array}
\right)=Z_N\left(
\begin{array}{ccc}
 - & - & + \\
 & - &
\end{array}
\right). \\
 z_8=Z_N\left(
\begin{array}{ccc}
 - & - & - \\
  & - &
\end{array}
\right).
\end{array}
\right.
\end{eqnarray*}
\begin{figure} [!htbp]\label{fig2}
\centering
\includegraphics[width=80mm]{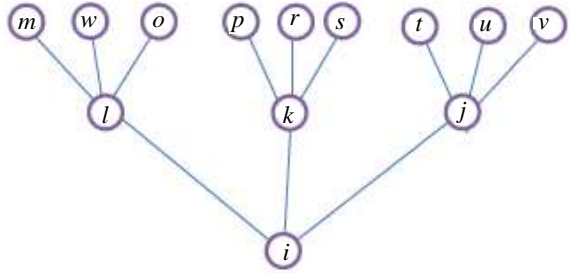}\ \ \ \ \ \ \ \ \ \ \ \
\caption{Configurations on semi-finite Cayley tree of order three
with levels 2. Schematic diagram to be illustrated the summation
used in Equation \eqref{partition1aa}.}\label{fig2}
\end{figure}
For the sake simplicity, denote $Z_{(n+1)}\left(
\begin{array}{ccc}
 l & k & j \\
  & i &
\end{array}
\right)=Z_{(n+1)}\left(i;l,k,j\right)$. We have
\begin{eqnarray}\label{partition1aa}
Z_{n+1}\left(i;l, k, j\right)&=&\sum _{m,w,o,p,r,s,t,u,v\in
\{-1,+1\}} [\exp
(A(m,w,o,p,r,s,t,u,v))\\
&&Z_{n}\left(l;m,w,o\right)Z_{n}\left(k;p,r,s\right)Z_{n}\left(j;t,u,v\right)],\nonumber
\end{eqnarray}
where
\begin{eqnarray*}
A(m,w,o,p,r,s,t,u,v)&=&Ji(j+k+l)+J_pi(m+w+o+p+r+s+t+u+v)\\&&+J_t(i(l(m+w+o)+k(p+r+s)+j(t+u+v)),
\end{eqnarray*}
$i,j,k,l\in \{-1,+1\}$ and $n=1,2,\ldots $ (see Figure
\ref{fig2}). Therefore, we can calculate the partial partition
functions as follows
$$
\left\{
\begin{array}{l}
 z_1^{(n+1)}=a^3(bc)^{-9}\left((bc)^6z_1^{(n)}+3(bc)^4z_2^{(n)}+3(bc)^2z_3^{(n)}+z_4^{(n)}\right){}^3 \\
 z_2^{(n+1)}=a(bc)^{-9}\left((bc)^6z_1^{(n)}+3(bc)^4z_2^{(n)}+3(bc)^2z_3^{(n)}+z_4^{(n)}\right){}^2 \\
 \ \ \ \ \ \ \  \ \ \text{}\times \left(b^6z_5^{(n)}+3c^2b^4z_6^{(n)}+3b^2c^4z_7^{(n)}+c^6z_8^{(n)}\right) \\
 z_3^{(n+1)}=a^{-1}(bc)^{-9}\left((bc)^6z_1^{(n)}+3(bc)^4z_2^{(n)}+3(bc)^2z_3^{(n)}+z_4^{(n)}\right) \\
 \ \ \ \ \ \ \  \ \ \text{}\times \left(b^6z_5^{(n)}+3c^2b^4z_6^{(n)}+3b^2c^4z_7^{(n)}+c^6z_8^{(n)}\right){}^2 \\
 z_4^{(n+1)}=a^{-3}(bc)^{-9}\left(b^6z_5^{(n)}+3c^2b^4z_6^{(n)}+3b^2c^4z_7^{(n)}+c^6z_8^{(n)}\right){}^3
\end{array}
\right.
$$
$$
\left\{
\begin{array}{l}
 z_5^{(n+1)}=a^{-3}(bc)^{-9}\left(z_1^{(n)}+3(bc)^2z_2^{(n)}+3(bc)^4z_3^{(n)}+(bc)^6z_4^{(n)}\right)^3 \\
 z_6^{(n+1)}=a^{-1}(bc)^{-9}\left(z_1^{(n)}+3(bc)^2z_2^{(n)}+3(bc)^4z_3^{(n)}+(bc)^6z_4^{(n)}\right)^2 \\
  \ \ \ \ \ \ \  \ \ \text{}\times \left(c^6z_5^{(n)}+3c^4b^2z_6^{(n)}+3c^2b^4z_7^{(n)}+b^6z_8^{(n)}\right) \\
 z_7^{(n+1)}=a(bc)^{-9}\left(z_1^{(n)}+3(bc)^2z_2^{(n)}+3(bc)^4z_3^{(n)}+(bc)^6z_4^{(n)}\right) \\
  \ \ \ \ \ \ \  \ \ \text{}\times \left(c^6z_5^{(n)}+3c^4b^2z_6^{(n)}+3c^2b^4z_7^{(n)}+b^6z_8^{(n)}\right)^2 \\
 z_8^{(n+1)}=a^3(bc)^{-9}\left(c^6z_5^{(n)}+3c^4b^2z_6^{(n)}+3c^2b^4z_7^{(n)}+b^6z_8^{(n)}\right)^3,
\end{array}
\right.
$$
where $a=e^{\beta J}$, $b=e^{\beta J_p}$, $c=e^{\beta J_t}$.
Noting that
\begin{eqnarray*}
&&\left(z_2^{(n+1)}\right)^3=\left(z_1^{(n+1)}\right)^2z_4^{(n+1)},\
\left(z_3^{(n+1)}\right)^3=z_1^{(n+1)}\left(z_4^{(n+1)}\right)^2,\\
&&\left(z_6^{(n+1)}\right)^3=\left(z_5^{(n+1)}\right)^2z_8^{(n+1)},\
\left(z_7^{(n+1)}\right)^3=z_5^{(n+1)}\left(z_8^{(n+1)}\right)^2,
\end{eqnarray*}
only four independent variables remain, and through the
introduction of the new variables
$u_i^{(n+1)}=(z_2^{(n+1)})^{\frac{1}{3}}$, we can obtain the
recurrence system the following simpler form:

\begin{eqnarray}\label{rec-eq2}
\left\{
\begin{array}{l}
 u_1^{(n+1)}=\frac{a}{(bc)^3}\left((bc)^2u_1^{(n)}+u_4^{(n)}\right)^3 \\
 u_4^{(n+1)}=\frac{1}{a(bc)^3}\left(b^2u_5^{(n)}+c^2u_8^{(n)}\right)^3 \\
 u_5^{(n+1)}=\frac{1}{a(bc)^3}\left(u_1^{(n)}+(bc)^2u_4^{(n)}\right)^3 \\
 u_8^{(n+1)}=\frac{a}{(bc)^3}\left(c^2u_5^{(n)}+b^2u_8^{(n)}\right)^3.
\end{array}
\right.
\end{eqnarray}
Let us define  operator as
$$F: u^{(n)} = (u_1^{(n)},
u_4^{(n)}, u_5^{(n)}, u_8^{(n)})\in \mathbb{R}^{4}_+Б\rightarrow
F(u^{(n)}) = (u_1^{(n+1)}, u_4^{(n+1)}, u_5^{(n+1)},
u_8^{(n+1)})\in \mathbb{R}^{4}_+.
$$
Then we can write the recurrence equations \eqref{rec-eq2} as
$u^{(n+1)} = F(u^{(n)}), n > 0$ which in the theory of dynamical
systems is called a trajectory of the initial point $u^{(0)}$
under the action of the operator $F$. Thus we can determine the
asymptotic behavior of $Z_n$ for $n\rightarrow \infty$ by the
values of $\lim_{n\rightarrow \infty } u^{(n)}$ i.e., the
trajectory of $u^{(0)}$ under the action of the operator $F$. In
this paper we study the trajectory (dynamical system) for a given
initial point $u^{(0)}\in \mathbb{R}^{4}_+$.

\section{Dynamics of the operator $F$}\label{Dynamics of the
operator}
\subsection{Fixed Points of the operator $F$}
In this subsection we are going to determine the fixed points,
i.e., solutions to $F(u) = u$. Denote $Fix(F ) = \{u : F(u) =
u\}.$\\
We introduce the new variables $\alpha =\sqrt[3]{a}$, $v_i =
\sqrt[3]{u_i}$, for $i = 1, 4, 5, 8$. Then the equation $F(v) = v$
becomes
\begin{eqnarray}\label{rec-eq3}
\left\{
\begin{array}{l}
 v_1=\frac{\alpha}{(bc)}\left((bc)^2v_1^{3}+v_4^{3}\right) \\
 v_4=\frac{1}{\alpha (bc)}\left(b^2v_5^{3}+c^2v_8^{3}\right) \\
 v_5=\frac{1}{\alpha(bc)}\left(v_1^{3}+(bc)^2v_4^{3}\right)\\
 v_8=\frac{\alpha}{(bc)}\left(c^2v_5^{3}+b^2v_8^{3}\right).
\end{array}
\right.
\end{eqnarray}
\subsection{The fixed points of the operator $F$ for $c=1$}
Let us consider the following set:
\begin{eqnarray}\label{set-invariant}
A=\{(v_1, v_4, v_5, v_8)\in
\mathbb{R}^{4}_+:v_1=v_8,v_4=v_5,c=1\}.
\end{eqnarray}
Note that the set $A$ is invariant with respect to $F$ i.e.,
$F(A)\subset БA.$
\begin{lem}\label{Lemma-fixed-set}
If a vector $\overrightarrow{u}$ is a fixed point of the operator
$F$, then $\overrightarrow{u}\in
M_1:=\{\overrightarrow{u}=(u_1,u_4,u_5,u_8)\in \mathbf{R}^{4}_{+}:
u_1=u_8,u_4=u_5\}$ or $\overrightarrow{u}\in
M_2:=\{\overrightarrow{u}=(u_1,u_4,u_5,u_8)\in \mathbf{R}^{4}_{+}:
\left(\sqrt[3]{u_4}+\sqrt[3]{u_5}\right)^2-\sqrt[3]{u_4u_5}=\psi
\left(\left(\sqrt[3]{u_1}+\sqrt[3]{u_8}\right){}^2-\sqrt[3]{u_1u_8}\right)\}$,
where $\psi(y)
=\frac{b^3+\alpha  y (1-b^4)}{b^2 \alpha
\left(b \alpha y-1\right)}$.
\end{lem}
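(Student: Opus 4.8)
The plan is to carry out the argument entirely in the cube-root variables $v_i=\sqrt[3]{u_i}$, in which a fixed point satisfies the system \eqref{rec-eq3}; throughout I impose the hypothesis $c=1$, so that \eqref{rec-eq3} reads $v_1=\frac{\alpha}{b}(b^2v_1^3+v_4^3)$, $v_4=\frac{1}{\alpha b}(b^2v_5^3+v_8^3)$, $v_5=\frac{1}{\alpha b}(v_1^3+b^2v_4^3)$, $v_8=\frac{\alpha}{b}(v_5^3+b^2v_8^3)$. First I would subtract the fourth equation from the first, and the third from the second, so as to isolate the two differences $d_1:=v_1-v_8$ and $d_2:=v_4-v_5$. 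Applying $v_i^3-v_j^3=(v_i-v_j)(v_i^2+v_iv_j+v_j^2)$ turns both combinations into expressions that are \emph{linear} in $d_1,d_2$, with coefficients built only from $\alpha,b$ and from the two symmetric quantities $S_1:=v_1^2+v_1v_8+v_8^2$ and $S_2:=v_4^2+v_4v_5+v_5^2$. The point of working with $S_1,S_2$ is that $S_1=(\sqrt[3]{u_1}+\sqrt[3]{u_8})^2-\sqrt[3]{u_1u_8}$ and $S_2=(\sqrt[3]{u_4}+\sqrt[3]{u_5})^2-\sqrt[3]{u_4u_5}$ are exactly the two expressions occurring in the definition of $M_2$.

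The two subtracted equations can then be written as a homogeneous linear system for the vector $(d_1,d_2)$, namely $d_1(\tfrac{b}{\alpha}-b^2S_1)-S_2d_2=0$ together with $S_1d_1+(\alpha b+b^2S_2)d_2=0$, i.e. $\mathcal{M}(S_1,S_2)\,(d_1,d_2)^{T}=0$ for a $2\times2$ matrix $\mathcal{M}$ whose entries involve only $\alpha,b,S_1,S_2$. Since the $v_i$ are strictly positive we have $S_1,S_2>0$, so the entries are well defined. The dichotomy of the lemma is now purely linear-algebraic: either $(d_1,d_2)=(0,0)$, which says $v_1=v_8$ and $v_4=v_5$ and hence $\overrightarrow{u}\in M_1$; or $(d_1,d_2)\neq(0,0)$, in which case a nonzero solution of a homogeneous system forces $\det\mathcal{M}(S_1,S_2)=0$.

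It remains to identify the scalar equation $\det\mathcal{M}(S_1,S_2)=0$ with membership in $M_2$. Expanding the determinant gives the single relation
\[
b^2+\tfrac{b^3}{\alpha}S_2-\alpha b^3 S_1+(1-b^4)S_1S_2=0 ,
\]
and solving this for $S_2$ as a function of $S_1$ yields a ratio of the two affine expressions $b^3+\alpha(1-b^4)S_1$ and $\alpha b^2(\alpha b S_1-1)$, which is the rational relation between $S_1$ and $S_2$ defining $M_2$ through the function $\psi$ of the statement. I expect this last bookkeeping to be the main obstacle: one must track the powers of $b$ and the factors of $\alpha=\sqrt[3]{a}$ through the product of binomials and the cross term so that numerator and denominator line up exactly with $\psi$, and one should check that the denominator $\alpha b^2(\alpha b S_1-1)$ does not vanish on the fixed points under consideration so that the division is legitimate. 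Assembling the two cases then gives $\overrightarrow{u}\in M_1\cup M_2$, as claimed.
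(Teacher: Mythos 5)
Your derivation follows the paper's proof essentially step for step: the two linear relations you obtain for $(d_1,d_2)=(v_1-v_8,\,v_4-v_5)$ are, up to multiplication by $-\alpha b^{-1}$ and $(\alpha b)^{-1}$ respectively, exactly the paper's equations \eqref{fixed-1a} and \eqref{fixed-1b}, and your vanishing-determinant condition is the same scalar relation that the paper obtains by equating its two expressions for the ratio $(v_1-v_8)/(v_4-v_5)$. If anything, your handling of the degenerate case is cleaner: the paper first argues $d_1=0\Leftrightarrow d_2=0$ and then assumes both differences nonzero, whereas the nontrivial-kernel argument covers everything at once.

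The problem is the step you postponed as ``bookkeeping'', because that is exactly where the identification with the stated $\psi$ breaks down. Your determinant equation
\[
b^2+\tfrac{b^3}{\alpha}S_2-\alpha b^3S_1+(1-b^4)S_1S_2=0
\]
is correct, but solving it for $S_2$ gives
\[
S_2=\frac{\alpha b^2\left(\alpha b S_1-1\right)}{b^3+\alpha(1-b^4)S_1}=\frac{1}{\psi(S_1)},
\]
the \emph{reciprocal} of the function in the statement: the affine expression $b^3+\alpha(1-b^4)S_1$ lands in the denominator, not the numerator. No tracking of powers of $b$ and $\alpha$ will repair this. You are in good company, since the paper's own final step contains the identical inversion: cross-multiplying its displayed equation
\[
\frac{\alpha b^{-1}S_2}{b\alpha S_1-1}=\frac{b^2S_2+\alpha b}{S_1}
\]
also yields $S_2=\alpha b^2(\alpha bS_1-1)/\bigl(b^3+\alpha(1-b^4)S_1\bigr)$, yet \eqref{fixed-inv2} is written with numerator and denominator interchanged. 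So your argument is sound and in fact exposes, rather than reproduces, the slip in the paper; to close the proof you should state $M_2$ with $1/\psi$ in place of $\psi$ (equivalently, swap numerator and denominator in the definition of $\psi$). Your worry about dividing by zero is easily settled: if $b^3+\alpha(1-b^4)S_1=0$, the determinant equation forces $b^2(\alpha bS_1-1)=0$, i.e.\ $S_1=1/(\alpha b)$; but then $b^3+\alpha(1-b^4)S_1=1/b\neq0$, a contradiction, so on any fixed point with $(d_1,d_2)\neq(0,0)$ the division is legitimate.
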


\begin{proof} From the system \eqref{rec-eq3}, we have
\begin{eqnarray}\label{fixed-1a}
&& \left(v_1-v_8\right)\left(b\alpha
\left(v_1^2+v_1v_8+v_8^2\right)-1\right)+\alpha
b^{-1}\left(v_4-v_5\right)\left(v_4^2+v_4v_5+v_5^2\right)=0
\\\label{fixed-1b}
&& \left(v_1-v_8\right)(\alpha
b)^{-1}\left(v_1^2+v_1v_8+v_8^2\right)+\left(v_4-v_5\right)(\alpha
b)^{-1}\left(b^2v_4^2+b^2v_4v_5+b^2v_5^2+\alpha [b]\right)=0.
\end{eqnarray}
From \eqref{fixed-1a} and \eqref{fixed-1b} one can conclude that
if $v_1=v_8$ (respectively $v_3=v_4$), then $v_3=v_4$
(respectively $v_1=v_8$). Therefore, $v_3=v_4$ if and only if
$v_3=v_4$.

Now let us assume that $v_1\neq v_8$ and $v_3\neq v_4$, then we
can reduce the equations \eqref{fixed-1a} and \eqref{fixed-1b} to
the following equation:
$$
\frac{\alpha  b^{-1}\left(v_4^2+v_4v_5+v_5^2\right)}{\left(b
\alpha
\left(v_1^2+v_1v_8+v_8^2\right)-1\right)}=\frac{\left(b^2v_4^2+b^2v_4v_5+b^2v_5^2+\alpha
b\right)}{\left(v_1^2+v_1v_8+v_8^2\right)}.
$$
Therefore, from the last equation we have
\begin{equation}\label{fixed-inv2}
v_4^2+v_4v_5+v_5^2=\frac{b^3+\alpha  \left(v_1^2+v_1
v_8+v_8^2\right)(1-b^4)}{b^2 \alpha  \left(b \alpha
\left(v_1^2+v_1 v_8+v_8^2\right)-1\right)}.
\end{equation}
The equation \eqref{fixed-inv2} gives
$\left(\sqrt[3]{u_4}+\sqrt[3]{u_5}\right)^2-\sqrt[3]{u_4u_5}=\psi
\left(\left(\sqrt[3]{u_1}+\sqrt[3]{u_8}\right){}^2-\sqrt[3]{u_1u_8}\right)$
(see \cite{GU2011a}).
\end{proof}
\begin{rem}
The fixed points of the operator $F$ belonging to the set $M_2$
give the ferromagnetic phases corresponding to the Ising model
\eqref{Ham-Ist}. In order to examine the fixed points of the
operator $F$ belonging to the set $M_2$ is analytically very
difficult. The ferromagnetic phase regions corresponding to the
Ising model \eqref{Ham-Ist} can numerically be determined.
\end{rem}
\subsection{The existence of paramagnetic and ferromagnetic phases}\label{The paramagnetic phases}
Let us first study the fixed points of the operator $F$ which
belong in $A$ given in the equation \eqref{set-invariant}. The
fixed points of the operator $F$ determine the paramagnetic phases
corresponding to the Ising model \eqref{Ham-Ist}.

The condition $u_1=u_8, u_4 = u_5$ reduces the equation $F(u) = u$
to the following equation
\begin{eqnarray}\label{c=1-fixed points}
x=g(x):=\alpha^2\left(\frac{1+b^2x^3}{b^2+x^3}\right),
\end{eqnarray}
where $x=\frac{v_1}{v_4}.$

The following Proposition gives full description of positive fixed
points of the function g in \eqref{c=1-fixed points}.
\begin{prop}\label{proposition1}
The equation \eqref{c=1-fixed points} (with $x \geq 0, \alpha > 0,
b > 0$) has one solution if $b<1$.  Assume that If $b
> \sqrt{2}$, then  the equation \eqref{c=1-fixed points} has 2 solutions
if either $\eta_1(b)=\alpha^{-2}$ or $\eta_2(b)=\alpha^{-2}$. If
$\eta_1(b)<\alpha^{-2}<\eta_2(b)$ and then there exists
$\eta_1(b)$, $\eta_2(b)$ with $0<\eta_1(b)<\eta_2(b)$ such that
the equation \eqref{c=1-fixed points} has 3  solutions. In this
case, we have
\begin{eqnarray*}
\eta_i
(b)&=&\frac{1}{x_i}\left(\frac{1+b^2x_i^3}{b^2+x_i^3}\right),
\end{eqnarray*}
where $x_i$ are the solutions of the equation $ b^2
x^6-2\left(b^4-2\right)x^3+b^2=0.$.
\end{prop}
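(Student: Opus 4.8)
The plan is to turn the count of positive fixed points of $g$ into a level-set count for a single explicit function, and then read off the three regimes from that function's monotonicity. Since $x=g(x)$ with $g(x)=\alpha^2\frac{1+b^2x^3}{b^2+x^3}$ and all quantities are positive, I would divide \eqref{c=1-fixed points} by $\alpha^2 x$ and set
\[
\Phi(x):=\frac{1+b^2x^3}{x\,(b^2+x^3)},\qquad x>0,
\]
so that $x$ solves \eqref{c=1-fixed points} if and only if $\Phi(x)=\alpha^{-2}$. Note $\Phi(x)\to+\infty$ as $x\to0^+$ and $\Phi(x)\to0$ as $x\to+\infty$; thus the number of solutions is exactly the number of intersections of the graph of $\Phi$ with the horizontal line at height $\alpha^{-2}$, and $\eta_i(b)=\Phi(x_i)$ as in the statement.

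The decisive step is the sign of $\Phi'$. Writing $\Phi=N/D$ with $N=1+b^2x^3$ and $D=b^2x+x^4$, a direct differentiation yields
\[
\Phi'(x)=-\frac{b^2x^6-2(b^4-2)x^3+b^2}{\bigl(b^2x+x^4\bigr)^2},
\]
so the critical points of $\Phi$ are precisely the positive roots of $P(x):=b^2x^6-2(b^4-2)x^3+b^2$, the very polynomial in the statement. Substituting $t=x^3$ reduces $P$ to the quadratic $Q(t)=b^2t^2-2(b^4-2)t+b^2$, whose product of roots is $1$, whose sum of roots is $2(b^4-2)/b^2$, and whose discriminant is $(b^4-1)(b^4-4)$. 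I would then separate cases. For $b<1$ the roots of $Q$ have product $1>0$ and negative sum, hence neither is positive; more generally, for every $b\le\sqrt2$ one checks $Q(t)\ge0$ on $t\ge0$ (negative discriminant when $1<b<\sqrt2$, a nonnegative double root at $t=1$ when $b=\sqrt2$), so $P\ge0$, $\Phi'\le0$, and $\Phi$ is (weakly) strictly decreasing from $+\infty$ to $0$; this gives exactly one solution, as claimed. For $b>\sqrt2$ the discriminant is positive and both roots of $Q$ are positive, producing two distinct critical points $x_1<x_2$.

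In the regime $b>\sqrt2$ I would pin down the shape of $\Phi$ from the sign of $P$: since $P(0)=b^2>0$ and the leading coefficient is positive, $\Phi'<0$ on $(0,x_1)$ and on $(x_2,\infty)$ while $\Phi'>0$ on $(x_1,x_2)$. Hence $x_1$ is a local minimum and $x_2$ a local maximum, and since $\Phi$ increases on $(x_1,x_2)$ we obtain $0<\eta_1(b)=\Phi(x_1)<\Phi(x_2)=\eta_2(b)$. Counting intersections of the line $\{y=\alpha^{-2}\}$ with the three monotone branches then finishes the argument: when $\eta_1(b)<\alpha^{-2}<\eta_2(b)$ the line crosses each of the branches $(0,x_1)$, $(x_1,x_2)$, $(x_2,\infty)$ once, giving three solutions; when $\alpha^{-2}=\eta_2(b)$ (resp.\ $\alpha^{-2}=\eta_1(b)$) it is tangent at $x_2$ (resp.\ $x_1$) and crosses one further branch transversally, giving two solutions; and otherwise it meets a single branch, giving one.

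The main obstacle I anticipate is bookkeeping rather than conceptual depth: correctly factoring $\Phi'$ so as to recognize $P(x)$, and then carefully handling the boundary cases $\alpha^{-2}=\eta_1(b)$ and $\alpha^{-2}=\eta_2(b)$ and the degenerate value $b=\sqrt2$, where the two critical points coalesce into a horizontal inflection and monotonicity—hence the single solution—persists. I would also make the identification $\eta_i(b)=\Phi(x_i)=\frac{1}{x_i}\frac{1+b^2x_i^3}{b^2+x_i^3}$ explicit to match the formula given, and verify the ordering $\eta_1(b)<\eta_2(b)$ directly from the increase of $\Phi$ on $(x_1,x_2)$, since it is exactly this ordering that makes the three-solution interval nonempty.
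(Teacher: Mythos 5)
Your proposal is correct, and it is a genuinely self-contained variant of what the paper does. The paper works with $g$ itself: it computes $g'$ and $g''$, notes $g$ is decreasing for $b<1$ (hence one fixed point), notes that for $b>1$ the graph is concave up on $(0,\sqrt[3]{b^2/2})$ and concave down beyond (hence at most three fixed points), and then invokes Preston's Proposition 10.7 to assert that multiple fixed points can occur only when the tangency equation $xg'(x)=g(x)$ has multiple solutions; expanding that equation gives exactly the sextic $b^2x^6-2(b^4-2)x^3+b^2=0$, whose roots are real precisely when $(b^4-1)(b^4-4)\ge 0$. Your route replaces the appeal to Preston by the observation that $g(x)=x$ is equivalent to $\Phi(x)=\alpha^{-2}$ with $\Phi(x)=\frac{1+b^2x^3}{x(b^2+x^3)}$, and the two are algebraically the same criterion, since
\begin{equation*}
\Phi'(x)=\frac{x\,g'(x)-g(x)}{\alpha^{2}x^{2}},
\end{equation*}
so your critical-point polynomial is the paper's tangency polynomial. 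What your framing buys is completeness: the paper never actually carries out the counting that justifies the conclusions ``$2$ solutions when $\eta_i(b)=\alpha^{-2}$'' and ``$3$ solutions when $\eta_1(b)<\alpha^{-2}<\eta_2(b)$,'' nor does it address $1<b\le\sqrt2$ (its statement jumps from $b<1$ to $b>\sqrt2$, and its set condition $b\in(0,1)\cap(\sqrt2,\infty)$ is a typo for the union); your monotone-branch analysis of $\Phi$ (decreasing from $+\infty$, local min $\eta_1$, local max $\eta_2$, decreasing to $0$) settles all of these cases, including the tangent boundary cases and the degenerate point $b=\sqrt2$, and makes the formula $\eta_i(b)=\Phi(x_i)$ transparent rather than a definition pulled out of the air. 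What the paper's route buys in exchange is brevity and the convexity bound ``at most $3$'' without any case analysis, at the cost of an external citation.
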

\begin{proof}
Let us take the first and the second derivatives of the function
$g$, we have
\begin{eqnarray}\label{first-derivative}
g'(x)=\frac{3\left(b^4-1\right) x^2
\alpha^2}{\left(b^2+x^3\right)^2},
\end{eqnarray}
$$
g''(x)=\frac{6 \alpha ^2 \left(b^4-1\right) x \left(b^2-2
x^3\right) }{\left(b^2+x^3\right)^3}.
$$
From \eqref{first-derivative}, if $b<1$ (with $x \geq 0$) then $g$
is decreasing and there can only be one solution of $g(x)=x.$
Thus, we can restrict ourselves to the case in which $b>1.$  It is
obvious that the graph of $y=g(x)$ over interval
$(0,\sqrt[3]{\frac{b^2}{2}})$ is concave up  and the graph of
$y=g(x)$ over interval $(\sqrt[3]{\frac{b^2}{2}},\infty)$ is
concave down. As a result, there are at most 3 positive solutions
for $g(x)=x.$

According to Preston \cite[Proposition 10.7]{Preston}, there can
be more than one fixed point of the function $g$ if and only if
there is more than one solution to $xg'(x) = g(x)$, which is the
same as
$$
b^2 x^6-2\left(b^4-2\right)x^3+b^2=0.
$$
These roots are
$$
x_1=\sqrt[3]{\frac{-2+b^4-\sqrt{4-5
b^4+b^8}}{b^2}},x_2=\sqrt[3]{\frac{-2+b^4+\sqrt{4-5
b^4+b^8}}{b^2}}.
$$
Note that if $4-5 b^4+b^8\geq 0$, then the roots $x_1$ and $x_2$
are real numbers. In this case, $4-5 b^4+b^8\geq 0$ if and only if
$b\in (0,1)\cap(\sqrt{2},\infty )$.
\end{proof}
\subsubsection{An Illustrative example for $c=1$.}
We have $P_4(x) = 0$ with a polynomial
\begin{equation}\label{polinomial4d}
P_4=x^4-\alpha ^2b^2x^3+b^2x-\alpha^2
\end{equation}
the coefficients of which depend on parameters $\alpha, b $. Thus
we obtain a quartic equation. Such equations can be solved using
known formulas (see \cite{Wolfram}), since we will have some
complicated formulas for the coefficients and the solutions, we do
not present the solution here. Nonetheless, we have manipulated
the polynomial equation via Mathematica \cite{Wolfram}. Here we
will only deal with positive fixed points, because of the
positivity of exponential functions.
We have obtained at most 3 positive real roots for some parameters
$J$, $J_p$ and $J_{t}=0$ (coupling constants) and temperature $T$.
\begin{figure} [!htbp]\label{3parametric-phase-c=1}
\centering
\includegraphics[width=60mm]{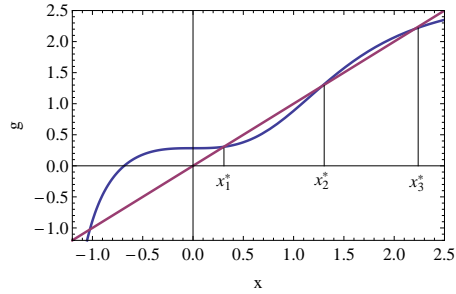}\label{fig1c}
\caption{(Color online) There exist three positive roots of the
equation \eqref{c=1-fixed points} for $J = -4.5, J_p = 14, T =
24.6$.}\label{3parametric-phase-c=1}
\end{figure}
For example, Fig. \ref{3parametric-phase-c=1} shows that there are
3 positive fixed points of the function \eqref{c=1-fixed points}
for $J = -4.5, J_p = 14, J_{t}=0, T = 24.6$. These all fixed
points are $x_0^*= -1.02554,x_1^*=0.306205, x_2^*=1.28008,
x_3^*=2.20209$, respectively. It is clear that $x_1^*=0.306205,
x_3^*=2.20209$ are stable, and $x_2^*=1.28008$ is unstable. The
point $x_{cr}=1.15992$ is the breaking point of  the function $g$.
There are two extreme paramagnetic phases associated to the
positive fixed points.

\subsection{Periodic Points of the operator $F$}
One of the most interested problems in the investigation of non
linear dynamical systems is the existence of periodic points.
While, for the one-dimensional case, every non linear dynamical
systems contains periodic points there is a $d$-dimensional
($d>1$) which contains no periodic points. In statistical physics,
these periodic points reveal the phase types corresponding to the
given model. We recall some definitions and results first.
\begin{defin}
A point $\textbf{u}=(u_1,u_4,u_5,u_8)$ in $\mathbf{R}^{4}_+$ is
called a periodic point of $F$ if there exists $p$ so that
$F^{p}(\textbf{u}) = \textbf{u}$ where $F^{p}$ is the $p$th
iterate of $F$. The smallest positive integer $p$ satisfying the
above is called the prime period or least period of the point
$\textbf{u}$. Denote by Per$_p(F)$ the set of periodic points with
prime period $p$.
\end{defin}
Let us first describe periodic points with $p=2$ on $M_1$ in this
case the equation $F(F(\textbf{u})) = \textbf{u}$ can be reduced
to a description of 2-periodic points of the function  $g$ defined
in \eqref{c=1-fixed points} i.e., to a solution of the equation
\begin{equation}\label{2-period1}
g (g(x)) = x.
\end{equation}
Note that the fixed points of f are solutions to
\eqref{2-period1}, to find other solutions we consider the
equation
$$
\frac{g(g(x))-x}{g(x)-x}=0,
$$
simple calculations show that the last equation is equivalent to
the following
\begin{eqnarray}\label{polinomial6d}
&&p_6(x):=4b^2(1+b^4 \alpha ^6)x^6+ \alpha ^2(b^4-1)x^5+b^2 \alpha
^4(b^4-1)x^4 + 2 b^4(1+ \alpha^6)x^3\\\nonumber &&+ b^2 \alpha
^2(b^4-1)x^2+\alpha ^4(b^4-1)x+b^2(b^4+\alpha ^6)=0.
\end{eqnarray}
In order to describe the periodic points with $p=2$ on $M_1$ of
the operator $F$, we should find the solutions to
\eqref{polinomial6d} which are different from the solutions of the
equation \eqref{polinomial4d}. On the other words, we obtain the
set
\begin{equation}\label{2-period-set}
M_3:=\{(u_1,u_4,u_5,u_8)\in \mathbf{R}^{4}_+:g (g(x)) = x
\}.
\end{equation}
Therefore, we should examine the roots of the polynomial $p_6(x)$
of degree 6. As mentioned above, the roots of such polynomials can
be described using known formulas. Since some complicated formulas
for the coefficients and the solutions are included, we will not
present the solution here. In order to illustrate the problem, we
have manipulated the equation \eqref{2-period1} via Mathematica
\cite{Wolfram} (see Figure \ref{periodic2-phase} (red color)). The
black graph in the Figure \ref{periodic2-phase} represents the
roots of the nonlinear function $y=g(x).$
\begin{figure} [!htbp]\label{periodic2-phase}
\centering
\includegraphics[width=60mm]{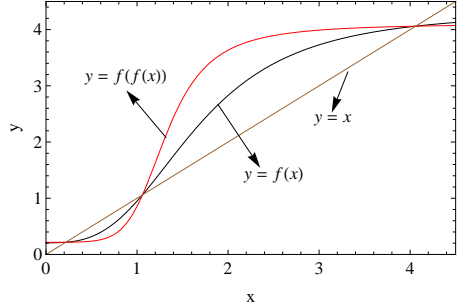}\label{periodic2-phase}
\caption{(Color online) There exist three positive roots of the
equation \eqref{2-period1} (black color) for $J = -6, J_p = 56.9,
T = 75$. Also, the equation \eqref{2-period1} has three positive
roots (red color) for $J = -6, J_p = 56.9, T =
75$.}\label{periodic2-phase}
\end{figure}

We can obtain an initial point of the sequence
$(u_1^{(n)},u_4^{(n)},u_5^{(n)},u_8^{(n)})$ under positive
boundary condition as follows:
$$
\textbf{u}_0=(u_1^{(0)},u_4^{(0)},u_5^{(0)},u_8^{(0)})=(ab^3c^3,\frac{b^3}{ac^3},\frac{1}{ac^3b^3},\frac{ac^3}{b^3}).
$$
In order to study some useful features of the function $g$, let us
give the following lemma.

\begin{lem}\label{repelling-point}
1) If $b> 1$ then the sequence $x_n =(x_{n-1})$, $n = 1, 2, . . .$
converges for the initial point
$x_0=\sqrt[3]{\frac{u_1^{(0)}}{u_4^{(0)}}}=\alpha ^{2}
> 0$  under positive boundary condition,  where $g$ is defined in \eqref{c=1-fixed points}.

2) If $b<1$ then the sequence $y_n =f(y_{n-1})$, $n = 1, 2, . . .$
converges for the initial point $y_0=\frac{\alpha ^2 \left(1+b^2
\alpha ^{6}\right)}{b^2+\alpha ^{6}}>0$ under positive boundary
condition, where $f(x) = g(g(x)).$
\end{lem}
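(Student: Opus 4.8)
The plan is to exploit the monotonicity of $g$ (already visible from the formula for $g'$ in \eqref{first-derivative}) together with its boundedness, and then invoke the monotone convergence theorem. From \eqref{first-derivative} one reads off that the sign of $g'(x)$ on $(0,\infty)$ is exactly the sign of $b^4-1$: when $b>1$ the map $g$ is strictly increasing, and when $b<1$ it is strictly decreasing. In either case $g$ is bounded, since $g(0)=\alpha^2/b^2$ and $g(x)\to \alpha^2 b^2$ as $x\to\infty$, so the image $g((0,\infty))$ lies in the bounded interval with endpoints $\alpha^2/b^2$ and $\alpha^2 b^2$. This boundedness is what ultimately forces convergence.

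For part 1, suppose $b>1$, so $g$ is strictly increasing. For an increasing continuous map the orbit $x_n=g(x_{n-1})$ is automatically monotone: comparing $x_1=g(x_0)$ with $x_0$ fixes the direction once and for all, because $g$ preserves order, so $x_1\ge x_0$ forces $x_{n+1}\ge x_n$ for every $n$ (and symmetrically for $\le$). Since the orbit also lies in the bounded interval $[\alpha^2/b^2,\alpha^2 b^2)$ after the first step, it is a bounded monotone sequence and therefore converges. By continuity of $g$ the limit is a fixed point, i.e. a solution of \eqref{c=1-fixed points}, which in the relevant regime is one of the positive fixed points supplied by Proposition \ref{proposition1}; the specific choice $x_0=\alpha^2$ only selects which fixed point is reached, not whether convergence occurs.

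For part 2, suppose $b<1$, so $g$ is strictly decreasing and the $g$-orbit need no longer be monotone. The standard device is to pass to the second iterate $f=g\circ g$, which as a composition of two decreasing maps is strictly increasing, and which inherits boundedness from $g$ (mapping into a bounded interval). The sequence $y_n=f(y_{n-1})$ is then an orbit of an increasing bounded map, so by the same argument as in part 1 it is monotone and bounded, hence convergent, its limit being a fixed point of $f$ — that is, either a fixed point of $g$ or a genuine $2$-periodic point of the type discussed around \eqref{polinomial6d}. The chosen initial value satisfies $y_0=\alpha^2(1+b^2\alpha^6)/(b^2+\alpha^6)=g(\alpha^2)$, i.e. it is the first $g$-image of $\alpha^2$, so $(y_n)$ coincides with a subsequence of the $g$-orbit issuing from $\alpha^2$.

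Because the sign of $g'$ is constant on all of $(0,\infty)$ in each regime, monotonicity of $g$ (or of $f$) never breaks down along the orbit, so the order-preservation argument applies uniformly and there is no region-trapping difficulty. The only point demanding real care is the decreasing case $b<1$, where the raw orbit oscillates and one must pass to $f=g\circ g$ before any monotonicity becomes available; everything else reduces cleanly to the boundedness of $g$ together with the monotone convergence theorem. I therefore expect the boundedness estimate — controlling $g$ by its limiting values $\alpha^2/b^2$ and $\alpha^2 b^2$ — to be the only step requiring explicit verification.
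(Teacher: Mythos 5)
Your proof is correct, and for part 1 it takes a genuinely different (and more economical) route than the paper. The paper's proof of part 1 restricts attention to the case where $g$ has three fixed points $x^*_1<x^*_2<x^*_3$ (supplied by Proposition \ref{proposition1}), classifies them as attractors/repeller via $g'$, and then traps the orbit in the intervals $(0,x^*_1)$, $(x^*_1,x^*_2)$, $(x^*_2,\infty)$, obtaining monotone convergence to an identified limit ($x^*_1$ or $x^*_3$) in each case. You instead use the generic facts that (i) an orbit of a continuous increasing map is automatically monotone, its direction being fixed by the comparison of $x_1=g(x_0)$ with $x_0$, and (ii) $g$ has bounded image, since $g(0)=\alpha^2/b^2$ and $g(x)\to\alpha^2 b^2$ as $x\to\infty$; monotone convergence then gives a limit which, by continuity, is a positive fixed point. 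Your version is shorter, needs no hypothesis on the number of fixed points (the paper's proof is, strictly speaking, incomplete outside the three-fixed-point regime, whereas the lemma is stated without that restriction), and works for an arbitrary initial point; what it gives up is the identification of \emph{which} fixed point is reached, information the paper's trapping argument provides but the lemma does not actually demand. For part 2 both arguments coincide in substance: pass to $f=g\circ g$, which is increasing as a composition of two decreasing maps, note $f'(x)=g'(g(x))g'(x)>0$, and rerun the increasing-map argument; your additional observations that $y_0=g(\alpha^2)$ and that $(y_n)$ is the odd-indexed subsequence of the $g$-orbit from $\alpha^2$ are accurate and tie the statement back to the boundary condition.
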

\begin{proof} 1) For $b> 1$ we have
$g'(x)=\frac{3\left(b^4-1\right) x^2
\alpha^2}{\left(b^2+x^3\right)^2}>0$ i.e., $g$ is an increasing
function. Here we consider the case when the function $g$ has
three fixed points $x^{*}_i , i = 1, 2, 3$ (see Proposition
 \ref{proposition1}  and Figure \ref{3parametric-phase-c=1}). We have that the point
 $x^{*}_2$ is a repeller i.e., $g'(x^{*}_2)> 1$ and the points $x^{*}_1,x^{*}_3$
are attractive i.e., $g'(x^{*}_1)< 1$ and $g'(x^{*}_3)< 1$. Now we
shall take arbitrary $x_0 > 0$ and prove that $x_n = g(x_{n-1})$,
$n\geq 1$ converges as $n\rightarrow \infty$. For any $x \in (0,
x^{*}_1)$ we have $x < g(x)<x^{*}_1$, since $g$ is an increasing
function, from the last inequalities we get $x < g(x) < g^{2}(x) <
g(x^{*}_1) = x^{*}_1$. Iterating this argument we obtain
$g^{n-1}(x) < g^{n}(x)<x^{*}_1$, which for any $x_0 \in (0,
x^{*}_1)$ gives $x_{n-1} < x_n < x^{*}_1$ i.e., $x_n$ converges
and its limit is a fixed point of $g$, since $g$ has a unique
fixed point $x^{*}_1$ in $(0, x^{*}_1]$. We conclude that the
limit is $x^{*}_1$. For $x\in (x^{*}_1, x^{*}_2)$ we have
$x^{*}_2> x> g(x)> x^{*}_1$, consequently $x_n > x_{n+1}$ i.e.,
$x_n$ converges and its limit is again $x^{*}_1$. Similarly, one
can show that if $x_0 > x^{*}_2$ then $x_n\rightarrow x^{*}_3$ as
$n\rightarrow \infty$.

2) For $b < 1$ we have $g$ is decreasing and has a unique fixed
point $x_1$ which is repelling, but $f$ is increasing since $f'(x)
= g'(g(x))g'(x)>0$. We have that $f$ has at most three fixed
points (including $x_1$). The point $x_1$ is repelling for $f$
too, since $f'(x_1) = g'(g(x_1))g'(x_1)= (g'(x_1))^2 > 1$. But
fixed points $x_-, x_+$ of $f$ are attractive. Hence one can
repeat the same argument of the proof of the part 1) for the
increasing function $f$ and complete the proof.
\end{proof}

\subsection{The phase diagrams of the model}
For plotting of  the phase diagrams in the Hamiltonian
three-parameter spaces, the following choice of reduced variables
is convenient:
\begin{eqnarray}\label{eq8}
x^{(n)} =\frac{u_{4}^{(n)}+u_{5}^{(n)}}{u_1^{(n)}+u_{8}^{(n)}}, \
y^{(n)} = \frac{u_1^{(n)}-u_{8}^{(n)}}{u_1^{(n)}+u_{8}^{(n)}}, \
z^{(n)} = \frac{u_{4}^{(n)}-u_{5}^{(n)}}{u_1^{(n)}+u_{8}^{(n)}}.
\end{eqnarray}
The variable $x^{(n)}$ is just a measure of the frustration of the
nearest-neighbor bonds and is not an order parameter like
$y^{(n)}, z^{(n)}$. It is convenient to know the broad features of
the phase diagram before discussing the different transitions in
more detail (see \cite{Vannimenus} for details). This can be
achieved numerically in a straightforward fashion.

Let $T/J=\alpha$, $-J_p/J=\beta$, $-J_{t}/J=\gamma$ and
respectively $a=\exp(\alpha^{-1}),
 b=\exp(-\alpha^{-1}\beta)$ and $c=\exp(-\alpha^{-1}\gamma)$. From the equations \eqref{eq8}, we can obtain the following
recurrence dynamical system:
\begin{eqnarray}\label{dynamical system2}
\left\{
\begin{array}{l}
 x^{(n+1)}=\frac{\left(c^2\left(1-y^{(n)}\right)+b^2\left(x^{(n)}-z^{(n)}\right)\right)^3+\left(1+y^{(n)}+b^2c^2\left(x^{(n)}+z^{(n)}\right)\right)^3}
{a^2\left(\left(x^{(n)}+b^2c^2\left(1+y^{(n)}\right)+z^{(n)}\right)^3+\left(b^2\left(1-y^{(n)}\right)+c^2\left(x^{(n)}-z^{(n)}\right)\right)^3\right)}\\ \\
 y^{(n+1)}=\frac{\left(x^{(n)}+b^2c^2\left(1+y^{(n)}\right)+z^{(n)}\right)^3-\left(b^2\left(1-y^{(n)}\right)+c^2\left(x^{(n)}-z^{(n)}\right)\right)^3}
{\left(x^{(n)}+b^2c^2\left(1+y^{(n)}\right)+z^{(n)}\right)^3+\left(b^2\left(1-y^{(n)}\right)+c^2\left(x^{(n)}-z^{(n)}\right)\right)^3}\\\\
z^{(n+1)}=\frac{\left(c^2\left(1-y^n\right)+b^2\left(x^n-z^n\right)\right)^3-\left(1+y^n+b^2c^2\left(x^n+z^n\right)\right)^3}
{a^2\left(\left(x^n+b^2c^2\left(1+y^n\right)+z^n\right)^3+\left(b^2\left(1-y^n\right)+c^2\left(x^n-z^n\right)\right)^3\right)}.
\end{array}
\right.
\end{eqnarray}
The system of three equations finally obtained in \eqref{dynamical
system2} is less complicated than one might have anticipated. It
remains difficult to tackle analytically apart from simple limits
and numerical methods are necessary to study its detailed behavior
(see \cite{Vannimenus}).

Starting from initial conditions
 \begin{eqnarray}\label{initial-con1}
\left\{
\begin{array}{l}
 x^{(1)}=\frac{1}{a^2c^6}, \\
 y^{(1)}=\frac{b^6-1}{b^6+1}, \\
 z^{(1)}=\frac{b^6-1}{a^2c^6(b^6+1)}.
\end{array}
\right.
\end{eqnarray}
that corresponds to positive boundary condition
$\bar{\sigma}^{(n)}(V\setminus V_n)\equiv 1,$ one iterates the
recurrence relations \eqref{rec-eq2} and observes behavior of the
phase diagrams after a large number of iterations ($n=10 000$).
For the fixed points, the corresponding magnetization $m$ is given
by
\begin{equation}
m^{(n)}=\frac{\left(1+x^{(n)}+y^{(n)}+z^{(n)}\right)^3-\left(1+x^{(n)}-y^{(n)}-z^{(n)}\right)^3}
{\left(1+x^{(n)}+y^{(n)}+z^{(n)}\right)^3+\left(1+x^{(n)}-y^{(n)}-z^{(n)}\right)^3}.
\end{equation}
The initial point of the magnetization $m$ can be obtained as;
\begin{equation}\label{initial-mag1}
m^{(1)}=\frac{(b^2-1)\left((1+b^2)^2-b^2\right)\left((1+b^6)^2-b^6\right)}{\left(1+b^2\right)\left(1-b^2+b^4\right)\left(1-b^6+b^{12}\right)}.
\end{equation}
Here the variable $x$ is a measure of the frustration of the
nearest-neighbor bonds \cite{Vannimenus}. Since for a paramagnetic
phase we have $u_1 = u_8$ and $u_4= u_5$, we get $y^{(n)}=
z^{(n)}\rightarrow 0$. Hence $m = 0$, but in case of coexistence
of several paramagnetic phases their measure of the frustration
(i.e. $x$) are different. These different values of $x$ are the
solutions to \eqref{c=1-fixed points}.

Now, assume that $c=\exp(-\alpha^{-1}\gamma)=1$. In the simplest
situation a fixed point $\textbf{u}^{*} = (u^{*}_1, u^{*}_4,
u^{*}_5, u^{*}_8)\in \mathbf{R}^{4}_{+}$ is reached. Possible
initial conditions with respect to  different boundary conditions
can be obtained in \cite{Vannimenus,MTA1985a}. In this paper, we
consider initial conditions \eqref{initial-con1} and
\eqref{initial-mag1}. Depending on $u^{*}_1, u^{*}_4, u^{*}_5,
u^{*}_8$, in the simplest situation a fixed point $(x^*,y^*,z^*)$
is reached. It corresponds to a {\it paramagnetic } phase (briefly
{\bf P}) if $y^*=0,z^*=0 $ or to a {\it ferromagnetic} phase
(briefly {\bf  F}) if $y^*,z^* \neq 0.$
The system may be periodic with period $p$, i.e. the periodic
phase is a configuration with some period. If the case $p=2$
corresponds to {\it antiferromagnetic } phase (briefly {\bf P2})
and the case $p=4$ corresponds to so-called {\it antiphase}
(briefly {\bf  P4}), that denoted \ $<2>$ for compactness in
\cite{Vannimenus,MTA1985a}.

Finally, the system may remain aperiodic, i.e. very long period to
compute or non-periodic. The distinction between a truly aperiodic
case and one with a very long period is difficult to make
numerically. Detailed information about the phase  analysis and
the relation with partition functions, it is mainly refered to
works given by Vannimenus \cite{Vannimenus}, Uguz {\it et al}
\cite{UGAT2012IJMPC} and Mariz {\it et al} \cite{MTA1985a}. Below
we just consider periodic phases with period $p$ where $p\leq 12$
(briefly {\bf P2-P12}).

It corresponds to (see \cite{Vannimenus,MTA1985a,Inawashiro} for
details) the cases in Lemma \ref{Lemma-fixed-set}:
\begin{itemize}
    \item a paramagnetic phase: if $\textbf{u}^{*}\in M_1$, in the figure \ref{fazk=3} (b), the white regions represent the paramagnetic
    phase. This represents  the set $M_1$ given in Lemma \ref{Lemma-fixed-set};
    \item a ferromagnetic phase: if $\textbf{u}^{*}\in M_2$, in the figure \ref{fazk=3} (b), the red regions represent the ferromagnetic
    phase. This represents  the set $M_2$ given in Lemma \ref{Lemma-fixed-set};
    \item in the figure \ref{fazk=3} (b), the yellow regions represent the
    \textbf{P2} phase. This represents  the set $M_3$ given in
    \eqref{2-period-set}.
\end{itemize}
\begin{figure}[!htbp]\centering
\includegraphics[width=50mm]{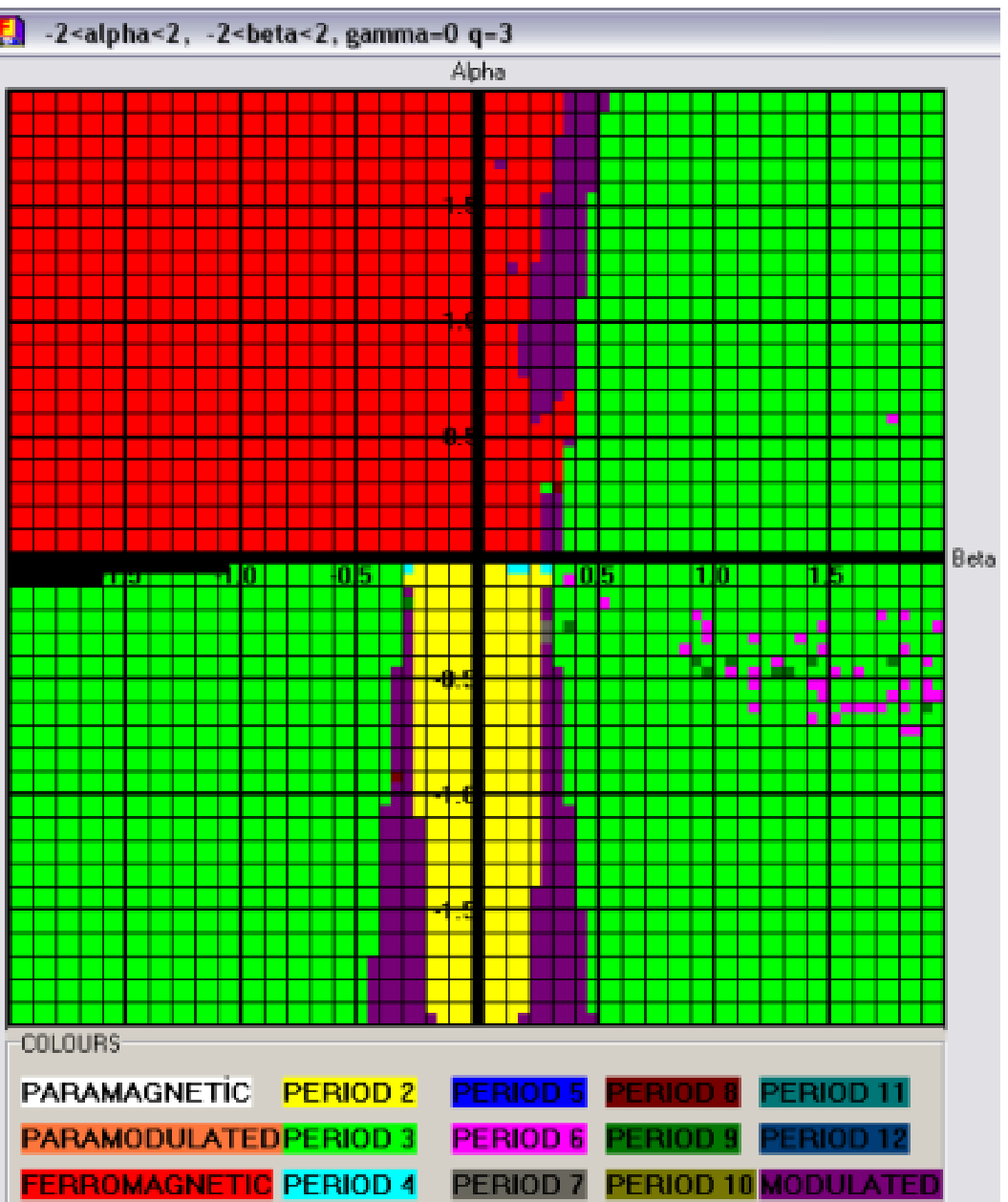}\qquad \quad
\includegraphics[width=50mm]{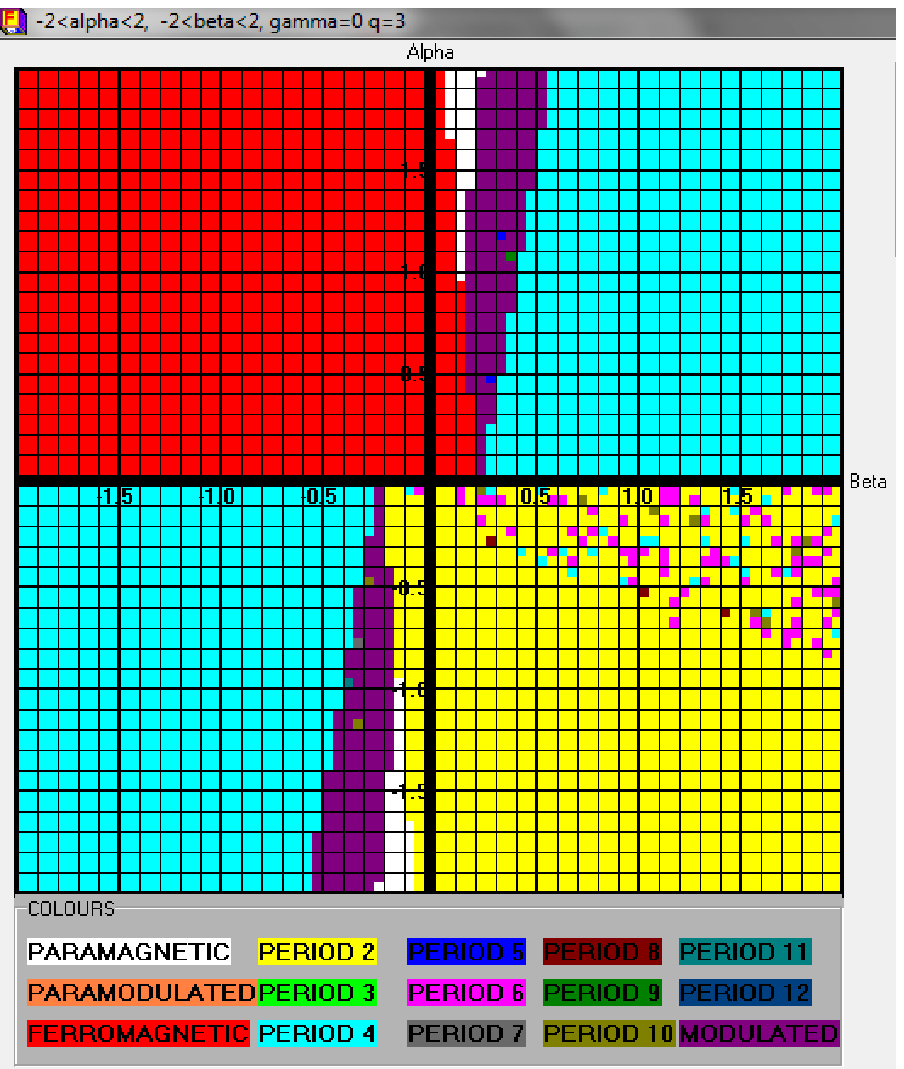}
\caption{(Color online) (a) Phase diagram of the model for
$J_p=0$; (b)  Phase diagram of the model for
$J_t=0$.}\label{fazk=3}
\end{figure}
Figure \ref{fazk=3} (a) and (b) show the phase diagrams of the
model on Cayley tree of order three. Contrary to the Vannimenus's
work \cite{Vannimenus}, here the multicritical Libschit points
appear in non-zero points. In Figure \ref{fazk=3} (a), we observe
that the phase diagram contains ferromagnetic (F), period-2,
period-3 and modulated phases. In Figure \ref{fazk=3} (b), the
phase diagram consists of ferromagnetic (F), paramagnetic (P)
(fixed point), chaotic (C) (or modulated), and antiferromagnetic +
+ - - (four cycle antiferromagnetic phase) phases. In the chaotic
phase, small regions with periodic orbits corresponding to
commensurate phases are observed. Although it is difficult, to
distinguish long period behavior from chaos, it seems that the
behavior in the intermediate region is predominantly chaotic
\cite{MTA1985a}. Note that the modulated phase can consist of
commensurate (periodic) and incommensurate (aperiodic) regions
corresponding to the so called "devilТs staircase". In order to
distinguish these phases from each other, one needs to analysis
the modulated phase regions via Lyapunov exponent and the
attractors in detail (see
\cite{UGAT2012IJMPC,MTA1985a,Inawashiro,Inawashiro-T1983}). Here,
we will not give these details.

From Proposition \ref{proposition1}, we have the following theorem
\begin{thm}\label{theorem-p} The model \eqref{Ham-Ist} (with $x \geq 0, \alpha > 0, b > 0$) has a unique
paramagnetic phase if $b<1$. Assume that If $b
> \sqrt{2}$, then  the model \eqref{Ham-Ist} has exactly two paramagnetic
phases if either $\eta_1(b)=\alpha^{-2}$ or
$\eta_2(b)=\alpha^{-2}$. If $\eta_1(b)<\alpha^{-2}<\eta_2(b)$ and
then there exists $\eta_1(b)$, $\eta_2(b)$ with
$0<\eta_1(b)<\eta_2(b)$ such that the model \eqref{Ham-Ist} has
exactly three paramagnetic phases.
\end{thm}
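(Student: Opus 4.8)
The plan is to read Theorem \ref{theorem-p} as the physical reformulation of the purely analytic counting already carried out in Proposition \ref{proposition1}. The bridge is the dictionary set up in the subsection preceding the theorem: a paramagnetic phase of the Hamiltonian \eqref{Ham-Ist} is exactly a fixed point $\mathbf{u}^{*}=(u_1^{*},u_4^{*},u_5^{*},u_8^{*})$ of the operator $F$ lying in the invariant set $A$ of \eqref{set-invariant}, i.e. with $u_1^{*}=u_8^{*}$, $u_4^{*}=u_5^{*}$ and $c=1$. So the first step is to make this correspondence precise and to establish that it is a \emph{bijection} between paramagnetic phases and positive solutions of the scalar equation \eqref{c=1-fixed points}.

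For the forward direction I would substitute $v_5=v_4$, $v_8=v_1$ and $c=1$ into the fixed-point system \eqref{rec-eq3}, reducing it to the two equations $v_1=\frac{\alpha}{b}\left(b^2v_1^3+v_4^3\right)$ and $v_4=\frac{1}{\alpha b}\left(v_1^3+b^2v_4^3\right)$; dividing them and writing $x=v_1/v_4$ yields exactly $x=g(x)$ with $g$ as in \eqref{c=1-fixed points}. Conversely, given a positive root $x^{*}$ of $g(x)=x$, I would recover the fixed point by solving the first equation for $v_4^2=\frac{b\,x^{*}}{\alpha\left(b^2(x^{*})^3+1\right)}$, which has a unique positive root since the right-hand side is positive, and then setting $v_1=x^{*}v_4$, $v_5=v_4$, $v_8=v_1$. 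The key point to check here is that the \emph{second} fixed-point equation, which forces $v_4^2=\frac{\alpha b}{(x^{*})^3+b^2}$, is then satisfied automatically: equating the two expressions for $v_4^2$ is precisely the relation $x^{*}=g(x^{*})$, so consistency holds exactly when $x^{*}$ is a fixed point of $g$. This shows each positive root lifts to one and only one fixed point in $A$, that $u_1^{*}=u_8^{*}$, $u_4^{*}=u_5^{*}$ force $y^{*}=z^{*}=0$ (confirming the phase is genuinely paramagnetic), and that distinct roots yield distinct frustration measures $x$, hence distinct phases.

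With the bijection in hand, the number of paramagnetic phases equals the number of positive solutions of \eqref{c=1-fixed points}, and the statement then follows verbatim from Proposition \ref{proposition1}: one root, hence one paramagnetic phase, when $b<1$; two roots when $b>\sqrt{2}$ and $\alpha^{-2}$ coincides with $\eta_1(b)$ or $\eta_2(b)$; and three roots when $\eta_1(b)<\alpha^{-2}<\eta_2(b)$.

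I expect the only genuine obstacle to be the bookkeeping of this correspondence rather than any new estimate. One must verify that the scaling freedom in the partition-function recursion neither collapses nor duplicates phases, i.e. that a paramagnetic phase is faithfully encoded by the single ratio $x=v_1/v_4$ and not by the full vector $\mathbf{u}^{*}$; the uniqueness of the lift $x^{*}\mapsto(v_1,v_4,v_4,v_1)$ established above is exactly what guarantees this. Once injectivity and surjectivity of the lift are secured, the counting is immediate, and the analytic content, namely the sign analysis of $g'$ in \eqref{first-derivative}, the inflection point $x=\sqrt[3]{b^2/2}$, and the Preston criterion $xg'(x)=g(x)$, has already been supplied by Proposition \ref{proposition1}.
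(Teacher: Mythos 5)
Your proposal is correct and follows essentially the same route as the paper: the paper obtains Theorem \ref{theorem-p} as an immediate consequence of Proposition \ref{proposition1}, via the already-established reduction of the fixed-point equation $F(u)=u$ on the invariant set $A$ to the scalar equation \eqref{c=1-fixed points} in the ratio $x=v_1/v_4$. Your explicit check that every positive root of $g(x)=x$ lifts uniquely to a fixed point in $A$ (the consistency of the two expressions for $v_4^2$) fills in a bookkeeping step the paper leaves implicit, but it does not alter the approach.
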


\subsection{The fixed points of the operator $F$ for $b=1$}
In the equation \eqref{rec-eq3}, if we assume as $b=1$, then we
have
\begin{eqnarray}\label{rec-eqb=1}
\left\{
\begin{array}{l}
 v_1=\frac{\alpha}{c}\left(c^2v_1^{3}+v_4^{3}\right) \\
 v_4=\frac{1}{\alpha c}\left(v_5^{3}+c^2v_8^{3}\right) \\
 v_5=\frac{1}{\alpha c}\left(v_1^{3}+c^2v_4^{3}\right)\\
 v_8=\frac{\alpha}{c}\left(c^2v_5^{3}+v_8^{3}\right).
\end{array}
\right.
\end{eqnarray}

Now, we describe the positive fixed points of system
\eqref{rec-eqb=1}.
%
Let us consider the following set
\begin{eqnarray}\label{non-para}
B:=\{(v_1,v_4,v_5,v_8)\in \mathbf{R}^4_+:v_1=v_8,v_4=v_5\}.
\end{eqnarray}
\begin{rem}
From the system \eqref{rec-eqb=1}, it is clear that the equations
$v_1=v_8$ and $v_4=v_5$ don't satisfy, i.e., the set $B$ given in
\eqref{non-para} is empty. Therefore, the paramagnetic phase
regions in the phase diagrams associated with the model disappear
in $[-2,2]\times [-2,2]\subset \mathbf{R}^{2}$.
\end{rem}
In \cite{NHSS1}, we studied the phase diagram and extreme Gibbs
measures of the Ising model on a two order Cayley tree in the
presence of competing binary and ternary interactions, we have
observed that the \textbf{P} regions completely disappears in the
phase diagram associated with the model \eqref{Ham-Ist} for $b=1$.

\section{Conclusions}\label{Conclusions}
Written for both mathematics and physics audience, this paper has
a fourfold purpose: (1) to study analytically the recurrence
equations associated with the model \eqref{Ham-Ist}; (2) to obtain
numerically the paramagnetic, the ferromagnetic and period 2
regions corresponding to the sets $M_1,M_2, B$, respectively; (3)
to illustrate the fixed points of corresponding operator; (4) to
compare the numerical results to exact solutions of the model.

We state some unsolved problems that turned out to be rather
complicated and require further consideration:
\begin{enumerate}
    \item Do any other invariant sets of the operator $F$ exist?
    \item Do positive fixed points of the operator $F$ exist outside the invariant sets?
    \item Does there exist a periodic points ($p>2$) of rather cumbersome high-order equations that can be solved by
      analytic methods?
\end{enumerate}
In the first case we have already obtained the fixed points of the
operator $F$ such that $u_1=u_8$ and $u_4=u_5$. For the periodic
case, however, it is not possible to obtain all solutions
satisfying all requirements of Equations \eqref{rec-eq2} such that
$\{\textbf{u}=(u_1,u_4,u_5,u_8)\in
\mathbf{R}^4_+:F^p(\textbf{u})=\textbf{u},p>1\}$ is invariant. The
proof of this statement is involved with a number mathematical
complexity. Also, in the second case ($b=1$), to find analytically
the fixed points of the operator  $F$ is much more difficult.

By using the standard approach, we have proved the existence of
phase transition for paramagnetic phase when $J_p> 0$ and for
phase with period 2 when $J_p<0$. These results fully consistent
with numerical results in \cite{Vannimenus}. In \cite{RAU}, the
authors have analytically studied the recurrence equations and
obtain some exact results: critical temperatures and curves,
number of phases, partition function for the Ising model on a
second-order Cayley tree. The problem can numerically be examined
by the approach in \cite{Vannimenus}. In the present paper, we
analytically investigate the fixed points of the dynamical system
associated with the Ising model on a rooted Cayley tree of order
three by solving a system of nonlinear functional equations (see
\cite{GATTMJ,AT1} for details).


\end{document}